	\algrenewcommand\algorithmicrequire{\textbf{Input:}}
	\algrenewcommand\algorithmicensure{\textbf{Output:}}
	\newcommand{\ra}[1]{\renewcommand{\arraystretch}{#1}}
\newcommand{\eps}{\varepsilon}
\DeclareMathOperator{\A}{\mathcal A}
\edef\endfrontmatter{%
 \unexpanded\expandafter{\endfrontmatter}
 \noexpand\endNoHyper 
}
\journal{ArXiv, accepted in Automatica.}
\begin{document}
\begin{frontmatter}

\title{On Algorithms verifying Initial-and-Final-State Opacity:\\ Complexity, Special Cases, and Comparison}

\thanks{Corresponding author: T. Masopust.}

\author{Tom{\' a}{\v s}~Masopust}\ead{tomas.masopust{@}upol.cz} \and
\author{Petr Osi\v{c}ka}\ead{petr.osicka@upol.cz}

\address{Faculty of Science, Palacky University Olomouc, Czechia}

\begin{abstract}
	Opacity is a general framework modeling security properties of systems interacting with a passive attacker. Initial-and-final-state opacity (IFO) generalizes the classical notions of opacity, such as current-state opacity and initial-state opacity. In IFO, the secret is whether the system evolved from a given initial state to a given final state or not. There are two algorithms for IFO verification. One arises from a trellis-based state estimator, which builds a semigroup of binary relations generated by the events of the automaton, and the other is based on the reduction to language inclusion. The time complexity of both algorithms is bounded by a super-exponential function, and it is a challenging open problem to find a faster algorithm or to show that no faster algorithm exists. We discuss the lower-bound time complexity for both general and special cases, and use extensive benchmarks to compare the existing algorithms.
\end{abstract}

\begin{keyword}
	Discrete-event systems, finite automata, initial-and-final-state opacity, verification, complexity, benchmarks
\end{keyword}

\end{frontmatter}
	
\section{Introduction}
	Opacity is a framework to specify security properties of systems interacting with a passive attacker. The attacker is an observer with a complete knowledge of the structure of the system and with a limited observation of its behavior. The secret is given either as a set of states or as a set of strings. The attacker estimates the behavior of the system based on its observations, and the system is opaque if the attacker never ascertains the secret. For more information, we refer to \cite{JacobLF16}.

	For automata models, several notions of opacity have been investigated in the literature, including current-state opacity (CSO), initial-state opacity (ISO), and initial-and-final-state opacity (IFO). Whereas CSO prevents the attacker from discovering that the system is currently in a secret state, ISO prevents the attacker from discovering that the system started in a secret state. \cite{WuLafortune2013} introduced IFO as a generalization of both CSO and ISO.

 	The applicability of theoretical results depends upon the computational efficiency of the developed algorithms. The fundamental question concerns the time complexity of the algorithms. In particular, a tight complexity characterization provides an insight into the size of instances the algorithms are able to handle in an acceptable time and helps us understand the notion of an optimal algorithm. For (super\nobreakdash-)exponential algorithms, the question of the existence of faster algorithms is particularly important, and lies in the core of the P vs. NP problem.

	For selected notions of opacity, \cite{Wintenberg2021} and \cite{BalunMO2024}  show that the existing algorithms can handle models with a few hundreds or thousands of states, which is far from industrial needs. For IFO, we show that the existing algorithms can handle only very small instances. In Figure~\ref{nfa5states}, we give an example of an automaton with five states and 13 events that our tools implementing current algorithms are not able to solve in 48 hours. 

	One of the purposes of this paper is to show that the situation is not hopeless as the example suggests. By making use of efficient tools for language-inclusion testing, we are able to verify IFO for significantly larger instances. In particular, we are able to verify the example of Figure~\ref{nfa5states} in a few milliseconds. To the best of our knowledge, no research on efficiency and optimization of algorithms for opacity verification has been done in the literature so far.

 	Whereas the worst-case time complexity to verify most of the opacity notions is exponential, and tight under the assumptions discussed below, the time complexity of current algorithms to verify IFO is super-exponential and it is not known whether there is an exponential-time algorithm or whether the super-exponential time complexity is tight. We show that for the algorithms discussed so far in the literature, the super-exponential time complexity is tight. 

 	The (non)existence of an exponential-time algorithm remains a challenging open problem. If there were such an algorithm, our results imply that it would require new ideas and verification techniques. To prove the nonexistence of such an algorithm is even more intricate, because no techniques are known regarding how to show that there is no algorithm of a given complexity; consider, for instance, the questions of the separation of complexity classes in complexity theory or of the existence of a polynomial algorithm for prime factorization.
 	
 	{\bf Current algorithms to verify IFO.}
 	There are two algorithms verifying IFO in the literature---the trellis-based algorithm of \cite{WuLafortune2013}, and the algorithm using a reduction to language inclusion of \cite{BalunMO23}. 
 
 	The trellis-based algorithm uses a state estimator of \cite{SabooriH2013} building a semigroup of binary relations defined by events of a given automaton. 
 	The inclusion algorithm is based on the language inclusion of two automata. It computes the product of one automaton with the complement of the other, and checks emptiness. For nondeterministic automata, the involved complementation requires the construction of the observer.
 
 	Considering time complexity, the upper bounds on existing algorithms are super-exponential of order $O^*(2^{n^2})$, where $n$ is the number of states of the automaton. The IFO verification problem is {\sc PSpace}-complete, and the generally accepted assumption that {\sc PTime} is different from {\sc PSpace} implies that there is no polynomial-time algorithm to verify IFO.
 	
 	However, there could be a sub-exponential-time algorithm. To (conditionally) exclude its existence, \cite{ImpagliazzoP01} formulated a strong exponential time hypothesis motivated by the fact that there is so far no algorithm solving SAT significantly faster than trying all possible truth assignments. The hypothesis states that, for any constant $c < 2$, there is a sufficiently large $k$ such that $k$-SAT (SAT with each clause of the formula containing no more than $k$ literals) cannot be solved in time $O(c^n)$, where $n$ is the number of variables of the formula.
 	Under this hypothesis, \cite{BalunMO23} showed that there is no algorithm to verify IFO of an $n$\nobreakdash-state automaton in time $O^*(2^{n/(2+\eps)})$, for any $\eps > 0$.
 	
 	The question that remains is whether we can verify IFO in exponential time $O^*(2^n)$ rather then in super-exponential time $O^*(2^{n^2})$. This is a significant difference; compare, for example, $2^5=32$ and $2^{5^2}=33,554,432$.
 
	{\bf Our contributions.} 
	First, we discuss the time complexity of both existing IFO-verification algorithms and show that their tight time complexity is super-exponential. Then, we show that the trellis-based algorithm is a special case of the other. 

	Second, we discuss several special cases. Namely, we show that (i) for the set of nonsecret pairs in the form of a Cartesian product, the complexity drops to $O^*(2^{n})$, (ii) for deterministic automata, the complexity drops to $O^*(2^{n \log n})$, (iii) for automata satisfying the observer property of \cite{WW96}, the complexity is polynomial, and (iv) for automata where all cycles are in the form of self-loops, the complexity drops to $O^*(2^{n(n+1)/2})$, resp. to $O^*((n+1)!)$ for deterministic automata, see Table~\ref{table02}.
	
	\begin{table}
		\centering
		\ra{1.1}
		\begin{tabular}{ccl}
			\toprule
			Lower bound & Upper bound & Condition\\
			\midrule
			$\Omega(2^{n^2})$ & $O^*(2^{n^2})$ & none \\
			$O^*(2^{n})$ & $O^*(2^{n})$ & $Q_{NS}=I_{NS}\times F_{NS}$\\
			$O^*(2^{n\log n})$ &  $O^*(2^{n\log n})$ & deterministic\\			
			poly & poly & observer property\\			
			$O^*(2^{n(n+1)/2})$ & $O^*(2^{n(n+1)/2})$ & self-loops nondet.\\			
			$O^*((n+1)!)$ & $O^*((n+1)!)$ & self-loops det.\\			
			\bottomrule
		\end{tabular}
		\smallskip
		\caption{Upper and lower bounds on the time complexity of existing algorithms for IFO verification. If they coincide, then the bound is tight. Here $n$ stands for the number of states of the automaton.}
		\label{table02}
	\end{table}

 	Third, we design new algorithms using advanced tools for language-inclusion testing and create extensive benchmarks, based on real data, to compare the existing and our new IFO-verification algorithms. Our results show that the new algorithms perform better and are able to verify larger instances. The algorithms and benchmarks are available at \url{https://apollo.inf.upol.cz:81/masopust/ifo-benchmarks}.

\section{Preliminaries}\label{prelim}
 	We assume that the reader is familiar with automata theory, see \cite{HopcroftU79}. For a set $S$, the cardinality of $S$ is denoted by $|S|$ and the power set of $S$ by $2^{S}$. An alphabet $\Sigma$ is a finite nonempty set of events, partitioned into $\Sigma_o$ and $\Sigma_{uo}$, of the observable and unobservable events, respectively. 
 	A string over $\Sigma$ is a finite sequence of events from $\Sigma$. The set of all strings over $\Sigma$ is denoted by $\Sigma^*$, and $\varepsilon$ denotes the empty string. A language $L$ over $\Sigma$ is a subset of $\Sigma^*$. 
	
	{\bf Automata.}
	An {\em automaton\/} over an alphabet $\Sigma$ is a triple $\A = (Q,\Sigma,\delta)$, where $Q$ is a finite set of states and $\delta \colon Q\times\Sigma \to 2^Q$ is a transition function that can be extended to the domain $2^Q\times\Sigma^*$ by induction. The language accepted by $\A$ from a set of states $I\subseteq Q$ by a set of states $F\subseteq Q$ is the set $L_m(\A,I,F) = \{w\in \Sigma^* \mid \delta(I,w)\cap F \neq\emptyset\}$ and the language generated by $\A$ from $I$ is the set $L(\A,I) = L_m(\A,I,Q)$. The automaotn $\A$ is {\em deterministic\/} if $|\delta(q,a)|\le 1$ for every state $q\in Q$ and every event $a \in \Sigma$.

	For two sets $I$ and $F$ of states, we use the notation $\A[I,F]$ to denote a copy of $\A$ where $I$ is the set of initial states and $F$ is the set of final states. Notice that $\A[I,F]$ is the classical {\em nondeterministic finite automaton\/} (NFA), and that it is a {\em deterministic finite automaton\/} (DFA) if it is deterministic and has a single initial state. To specify the components of $\A[I,F]$, we use the standard notation $\A=(Q,\Sigma,\delta,I,F)$. If a set is a singleton, we simply write its element; for instance, if $I=\{i\}$, we write $\A=(Q,\Sigma,\delta,i,F)$.
	
	For automata $\A_i = (Q_i,\Sigma_i,\delta_i)$, for $i=1,\ldots,n$ and $n\ge 2$, if the sets $Q_i$ and $Q_j$ are disjoint whenever $i\neq j$, then the \emph{nondeterministic union\/} of $\A_1, \ldots, \A_n$ is the automaton $\A = (\bigcup_{i=1}^{n} Q_i, \bigcup_{i=1}^{n} \Sigma_i, \bigcup_{i=1}^{n} \delta_i)$ formed by union of components; since every function is a relation by definition, we can view the transition function $\delta_i$ as a subset of $Q_i \times \Sigma_i \times Q_i$, which justifies the definition of $\A$. Then, for every subsets $I$ and $F$ of $\bigcup_{i=1}^{n} Q_i$, we have $L(\A,I) = \bigcup_{i=1}^{n} L(\A_i,I\cap Q_i)$ and $L_m(\A,I,F) = \bigcup_{i=1}^{n} L_m(\A_i, I\cap Q_i, F\cap Q_i)$.
	
	The \emph{disjoint union\/} of automata $\A_1, \ldots, \A_n$ first makes the state sets of the automata pairwise disjoint by a suitable renaming of the states, together with the corresponding adjustment of the transition functions, and then performs the nondeterministic union on the resulting automata.
	
	{\bf Projections of strings, languages, and automata.}
	A {\em projection\/} $P\colon \Sigma^* \to \Sigma_o^*$ is a morphism for concatenation defined by $P(a) = \eps$, for $a\in \Sigma\setminus\Sigma_{o}$, and $P(a)=a$, for $a\in \Sigma_o$. Intuitively, the action of $P$ is to erase all unobservable events. We lift projections from strings to languages in the usual way.

 	Let $\A$ be an automaton over $\Sigma$, and let $P\colon \Sigma^* \to \Sigma_o^*$ be a projection. The {\em projected automaton\/} of $\A$, denoted by $P(\A)$, is obtained from $\A$ by replacing every transition $(q,a,r)$ with $(q,P(a),r)$, and by using the classical elimination of $\eps$\nobreakdash-transitions. Then $P(\A)$ is an automaton over $\Sigma_o$, preserving the same observable behavior as $\A$, and with the same set of states as $\A$. The automaton $P(\A)$ can be constructed from $\A$ in polynomial time \citep{HopcroftU79}. 
 	
 	For an NFA $\A$, the reachable part of the DFA constructed from $P(\A)$ by the standard subset construction is called the {\em observer} of $\A$. In the worst case, the observer of $\A$ has exponentially many states compared with $\A$; see \cite{JiraskovaM12}.

	{\bf IFO.}
 	An automaton $\A=(Q,\Sigma,\delta)$ is \emph{initial-and-final-state opaque} (IFO) with respect to sets $Q_S,Q_{NS}\subseteq Q \times Q$ of secret and nonsecret pairs of states, respectively, and a projection $P\colon \Sigma^*\rightarrow \Sigma_o^*$, if for every secret pair $(s,t) \in Q_S$ and every string $w \in L_m(\A,s,t)$, there is a nonsecret pair $(s',t') \in Q_{NS}$ and a $w' \in L_m(\A,s',t')$ such that $P(w) = P(w')$.

	{\bf Asymptotic complexity.}
 	Let $g\colon\mathbb{R}\to\mathbb{R}$ be a real function. The class $O(g(n)) = \{f\colon\mathbb{R}\to\mathbb{R} \mid \text{there are } c, n_0 > 0 \text{ such that } 0 \le f(n) \le cg(n), \text{ for every } n \ge n_0 \}$ consists of functions that do not grow asymptotically faster than $g$; intuitively, $O(g(n))$ neglects constant factors. Analogously, $O^*(g(n))=O(g(n) \textrm{poly}(n))$ neglects constant and polynomial factors.
 	The class $o(g(n)) = \{f\colon\mathbb{R}\to\mathbb{R} \mid \text{for every } c > 0 \text{ there is } n_0 > 0 \text{ such that } |f(n)| < cg(n), \text{ for every } n \ge n_0 \}$ consists of functions that grow asymptotically strictly slower than $g$; that is, $f(n) \in o(g(n))$ if and only if $\lim_{n\to\infty} f(n)/g(n) = 0$.
 	The class of functions that do not grow asymptotically slower than $g$ is denoted by $\Omega(g(n))$ and defined by $f(n) \in \Omega(g(n))$ if and only if $g(n) \in O(f(n))$.

 	A function $f(n)$ is \emph{super-exponential\/} if it grows faster than any exponential function of the form $c^n$, where $c$ is a constant; formally, $\lim_{n\to\infty} f(n)/c^n = \infty$ for all constants $c>1$.

	{\bf Semigroup theory and automata.}
	A \emph{semigroup\/} is a set $S$ together with a binary operation $\cdot$ on $S$ that is associative, i.e., $a\cdot(b\cdot c)=(a\cdot b)\cdot c$. 
	
 	For an automaton $\A=(Q,\Sigma,\delta)$ with $n$ states, every event $a\in\Sigma$ defines a \emph{binary relation\/} on $Q$ that can be represented by an \emph{$n\times n$ binary matrix} $(a_{ij})$, where $a_{ij}=1$ if $j\in \delta(i,a)$, and $a_{ij}=0$ otherwise. A \emph{boolean multiplication\/} of binary matrices $(a_{ij})$ and $(b_{ij})$ is defined as the matrix $(c_{ij})$, where $c_{ij} = \max\{a_{ik}b_{kj} \mid k=1,\ldots,n \}$, that is, as the classical matrix multiplication where addition is replaced by maximum. 
	The set of all $n\times n$ binary matrices together with the boolean matrix multiplication forms a semigroup $\mathcal{B}_n$ containing $2^{n^2}$ elements.
	
	Consider the set of binary matrices $\mathcal{G}_{\A} = \{(a_{ij}) \mid a \in \Sigma\}$ corresponding to the events of $\A$, and denote by $\mathcal{B}_{\A}$ the semigroup containing all possible finite products of elements of $\mathcal{G}_{\A}$. Then, $\mathcal{B}_{\A}$ is a subsemigroup of $\mathcal{B}_n$, and $\mathcal{G}_{\A}$ is a set of generators of $\mathcal{B}_{\A}$. In particular, every string $w\in \Sigma^*$ defines a binary matrix $(w_{ij})\in \mathcal{B}_{\A}$ representing the relation on $Q$ such that $w_{ij}=1$ if and only if $j\in\delta(i,w)$. 
	
	A fundamental question in semigroup theory is the minimum number of generators of a semigroup. Despite an intensive study of this question for the semigroup $\mathcal{B}_n$ for more than 60 years, the answer is known only for $n\le 8$, see \cite{hivert2021minimal} or the sequence A346686 of the On-Line Encyclopedia of Integer Sequences (OEIS). Although the minimum number of generators of $\mathcal{B}_n$ is unknown for $n\ge 9$, \cite{Devadze1968} claimed without proof, and \cite{Konieczny2011} proved, that it grows super-exponentially with respect to $n$. A lower bound on the number of generators of $\mathcal{B}_n$ is $2^{{n^2}/{4} - O(n)}/(n!)^2$, see \cite[Corollary~3.1.8]{hivert2021minimal}. 

	Proposition~6 and Theorem~7 in \cite{KimRoush1978} further show that (i) for every odd natural number $n$, there are two binary matrices generating a subsemigroup of $\mathcal{B}_n$ with $((n^2-1)/4) \cdot 2^{(n^2-1)/4}$ elements, and (ii) for every integer-valued function $f$ such that $f(n)>n$ and $f(n)/n^2$ tends to zero, the average size of the semigroup generated by two randomly chosen $n\times n$ binary matrices, each with $f(n)$ ones, is at least $2^{(n^2/4)+o(n^2)}$.

	{\bf Semigroup theory and deterministic automata.}
	Similarly as strings over an automaton correspond to binary relations on its states, the strings over a deterministic automaton $\A=(Q,\Sigma,\delta)$ correspond to partial functions on states. Namely, every $w\in \Sigma^*$ defines a partial function $\delta_w$ on $Q$ as follows: for every $q\in Q$, $\delta_w(q)=\delta(q,w)$. 
	Partial functions on $Q$ are called \emph{partial transformations\/} and, together with the composition of functions, form a so-called {\em partial-transformation semigroup\/} denoted by $\mathcal{PT}_n$. 
	The semigroup $\mathcal{PT}_n$ has $(n+1)^n$ elements and is generated by four transformations, see \cite[Exercises 12--13, page~41]{howie1995}. 
	
	The reader can find more information on the size of the subsemigroups of $\mathcal{PT}_n$ generated by less than four transformations in \cite{HolzerK04} and \cite{KrawetzLS05}. For total transformations, we refer to \cite{Salomaa60} or \cite{denes1966}.

\section{Trellis-Based IFO Verification}
\label{secSemigroupVer}
 	In this section, we reformulate the trellis-based algorithm of \cite{WuLafortune2013} in terms of semigroups of binary relations, see Algorithm~\ref{alg-trellis}. 

 	Given an automaton $\A$ over $\Sigma$, sets $Q_S$ and $Q_{NS}$ of secret and nonsecret pairs of states, and a set of observable events $\Sigma_o\subseteq \Sigma$, the algorithm checks whether every matrix $(w_{ij}) \in \mathcal{B}_{\A}$ generated by the events of $\A$ satisfies the condition: {\it If there is 1 at a position $w_{sf}$ corresponding to a secret pair $(s,f)\in Q_S$, then there is 1 at a position $w_{s'f'}$ for some nonsecret pair $(s',f')\in Q_{NS}$}. 
 
	 \begin{algorithm}\hrule height .08em\vspace{-5pt}
		\caption{Semigroup-based IFO verification}
		\label{alg-trellis}
		\begin{algorithmic}[1]
			\vspace{2pt}\hrule\vspace{3pt}
			\Require An automaton $\A=(Q,\Sigma,\delta)$,
	    	\Statex sets $Q_S,Q_{NS}\subseteq Q\times Q$ of secret and nonsecret pairs,
	    	\Statex and an alphabet $\Sigma_o\subseteq \Sigma$ of observable events.
		
			\Ensure 
		    {\tt true} iff $\A$ is IFO wrt $Q_S$, $Q_{NS}$, and $P\colon \Sigma^*\to\Sigma_o^*$.
		    
			\State Construct the projected automaton $P(\A)$

			\State Compute the elements of the semigroup $\mathcal{B}_{P(\A)}$, generated by the events of $\Sigma_o$, one by one
		   
			\For {every newly generated element $(w_{ij}) \in \mathcal{B}_{P(\A)}$}
		    	\For {every secret pair $(s,f)\in Q_S$}
		     		\If {$w_{sf}=1$ and there is no nonsecret pair 
		      			\State $(s',f')\in Q_{NS}$ such that $w_{s'f'}=1$}
		      			\State \Return {\tt false}
		     		\EndIf
		    	\EndFor
		   \EndFor
		   \State \Return {\tt true}
		  \end{algorithmic}
		  \hrule height .08em
	 \end{algorithm}
 
 	Considering the worst-case time complexity of Algorithm~\ref{alg-trellis}, the construction of the projected automaton $P(\A)$ is polynomial in the number of states of $\A$, whereas the computation of the semigroup $\mathcal{B}_{P(\A)}$ depends on the structure of $\A$ and may vary significantly. Taking the size of $\mathcal{B}_{P(\A)}$ as a parameter, the time complexity of Algorithm~\ref{alg-trellis} is dominated by the cycle on lines~3--7, giving the overall time complexity $O(|\mathcal{B}_{P(\A)}|\cdot n^2)$, where $n$ is the number of states of $\A$; indeed, the inner loop on lines~4--7 may be verified by a single scan of all elements of the matrix $(w_{ij})$. 
 
 	Before we discuss the maximal size of the semigroup $\mathcal{B}_{P(\A)}$, notice that for instances that are IFO, the algorithm has to build the whole semigroup $\mathcal{B}_{P(\A)}$, whereas for instances that are not IFO, the algorithm terminates as soon as a matrix that fails the condition is generated.
 
 	We now discuss the lower-bound complexity of Algorithm~\ref{alg-trellis}.
  
	\begin{thm}\label{thm5}
		For every $n\ge 1$, there is an automaton $\A_n$ with $n$ states and $2^{n^2}$ events, all of which are observable, for which Algorithm~\ref{alg-trellis} needs at least $2^{n^2}$ steps.
	\end{thm}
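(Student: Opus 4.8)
The plan is to realize the full boolean matrix semigroup $\mathcal{B}_n$ directly as the transition semigroup of $\A_n$, exploiting the fact that $\mathcal{B}_n$ already contains exactly $2^{n^2}$ elements. Concretely, I would take the state set $Q = \{1,\dots,n\}$ and, for each of the $2^{n^2}$ binary matrices $M \in \mathcal{B}_n$, introduce a distinct event $a_M$ whose transition relation is $M$; that is, $j \in \delta(i,a_M)$ if and only if $M_{ij}=1$. By construction $\A_n$ has $n$ states and $2^{n^2}$ events, and I would declare every event observable, so that $\Sigma_o=\Sigma$, the projection $P$ is the identity, and $P(\A_n)=\A_n$.

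The key observation is that the generator set $\mathcal{G}_{\A_n} = \{(a_{ij}) \mid a\in\Sigma\}$ is, by design, equal to the entire semigroup $\mathcal{B}_n$. Since $\mathcal{B}_n$ is closed under boolean multiplication, every finite product of generators already lies in $\mathcal{B}_n$, and conversely every element of $\mathcal{B}_n$ is itself a generator; hence $\mathcal{B}_{P(\A_n)} = \mathcal{B}_{\A_n} = \mathcal{B}_n$, a semigroup with exactly $2^{n^2}$ elements.

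It then remains only to ensure that Algorithm~\ref{alg-trellis} actually enumerates all of these elements rather than halting early. The sole way the algorithm can return before exhausting the semigroup is by detecting, on lines~4--7, a generated matrix that violates the IFO condition for some secret pair. I would therefore set $Q_S=\emptyset$. With no secret pairs, the inner loop has no iterations, the condition is vacuously satisfied by every generated matrix, the algorithm never returns {\tt false}, and it is forced to compute and process all elements of $\mathcal{B}_{P(\A_n)}=\mathcal{B}_n$ one by one. Because the loop on lines~3--7 executes (at least) once per distinct element, and there are $2^{n^2}$ of them, the algorithm performs at least $2^{n^2}$ steps.

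The construction itself requires no calculation, so there is no genuine technical obstacle; the only point needing care is guaranteeing that the step count cannot be undercut by early termination, which the choice $Q_S=\emptyset$ (equivalently, any instance that happens to be IFO) rules out. One could instead exhibit a genuinely opaque instance with a nonempty secret, but taking $Q_S=\emptyset$ is the cleanest way to force the whole semigroup to be built.
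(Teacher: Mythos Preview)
Your proposal is correct and follows essentially the same construction as the paper: both build $\A_n$ by taking one event per element of $\mathcal{B}_n$, so that $\mathcal{B}_{P(\A_n)}=\mathcal{B}_n$, and then observe that on any positive IFO instance the algorithm must enumerate all $2^{n^2}$ matrices. The only cosmetic difference is that you force positivity by setting $Q_S=\emptyset$, whereas the paper simply stipulates that the instance be positive (and later, in its $\A_5$ example, achieves this via $Q_S=Q_{NS}$).
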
  
	\begin{proof}
		For every $n\ge 1$, the automaton $\A_n=(Q_n,\Sigma_n,\delta_n)$ is constructed from the semigroup $\mathcal{B}_n$ by taking, for every matrix $(a_{ij})$ of $\mathcal{B}_n$, an event $a$ that connects state $i$ to state $j$ if and only if $a_{ij}=1$. Formally, $Q_n=\{1,2,\ldots,n\}$, $\Sigma_n = \{a \mid (a_{ij})\in \mathcal{B}_n\}$, and for $i,j\in Q_n$ and $a\in \Sigma_n$, we define $j \in \delta_n(i,a)$ if and only if $a_{ij}=1$. Then, the semigroup $\mathcal{B}_{\A_n}$ coincides with the semigroup $\mathcal{B}_n$. Consequently, if the IFO instance for the automaton $\A_n$ is positive, then Algorithm~\ref{alg-trellis} has to verify all matrices of $\mathcal{B}_{n}$, of which there are $2^{n^2}$.
	\end{proof}
		
	From the construction, we may observe that Algorithm~\ref{alg-trellis} has to make at least  $2^{n^2}$ steps even if we define an event for every generator of $\mathcal{B}_n$ rather than for every element of  $\mathcal{B}_n$.
	
	For an illustration, consider the semigroup $\mathcal{B}_2$ with its three generators 
	$(a_{ij})=\big(\begin{smallmatrix}
		0 & 1 \\
		1 & 0
	\end{smallmatrix}\big)$,
	$(b_{ij})=\big(\begin{smallmatrix}
		1 & 0 \\
		1 & 1
	\end{smallmatrix}\big)$, and
	$(c_{ij})=\big(\begin{smallmatrix}
		1 & 0 \\
		0 & 0
	\end{smallmatrix}\big)$.
	Then, the automaton $\A_2$ constructed in Theorem~\ref{thm5} contains two states, $Q_2=\{1,2\}$, three events, $\Sigma_2=\{a,b,c\}$, and the transition function $\delta_2$ is obtained from the matrices as follows: the matrix $(a_{ij})$ results in the transitions $\delta(1,a)=\{2\}$ and $\delta(2,a)=\{1\}$, the matrix $(b_{ij})$ results in the transitions $\delta(1,b)=\{1\}$ and $\delta(2,b)=\{1,2\}$, and the matrix $(c_{ij})$ results in the transition $\delta(1,c)=\{1\}$. 
	Since the matrices $(a_{ij})$, $(b_{ij})$, $(c_{ij})$ are generators of the semigroup $\mathcal{B}_2$, each of the 16 elements of $\mathcal{B}_2$ can be written as a multiplications of (some of) the generators. For example, the matrix $\big(\begin{smallmatrix}
		1 & 1 \\
		1 & 1
	\end{smallmatrix}\big)=(a_{ij})(b_{ij})(a_{ij})(b_{ij})$. This matrix corresponds to the relation on states of $\A_2$ defined by the string $abab$; indeed, we have $\delta(i,abab)=\{1,2\}$, for $i\in\{1,2\}$.
	Consequently, $\mathcal{B}_{\A_2}=\mathcal{B}_2$, and if $\A_2$ satisfies IFO, then Algorithm~\ref{alg-trellis} generates all elements of $\mathcal{B}_2$ and requires thus at least 16 steps to verify each of the 16 elements of $\mathcal{B}_2$.
	
	Analogously, considering $\mathcal{B}_5$ with its 13 generators results in the automaton $\A_5$ with five states and 13 events depicted in Figure~\ref{nfa5states}. This automaton is of particular interest, because a tool implementing Algorithm~\ref{alg-trellis} with the automaton $\A_5$, sets $Q_S=Q_{NS}=\{(1,1),(2,2),(3,3),(4,4),(5,5)\}$, and the alphabet $\Sigma_o=\Sigma$ as input does not terminate in 48 hours.\footnote{The choice of $Q_S=Q_{NS}$ only ensures that $\A_5$ satisfies IFO. The same behavior would be observed for any other choice of the sets $Q_S$ and $Q_{NS}$ for which the automaton $\A_5$ satisfies IFO.}
	
	\begin{figure}
		\centering
		\includegraphics[scale=.45]{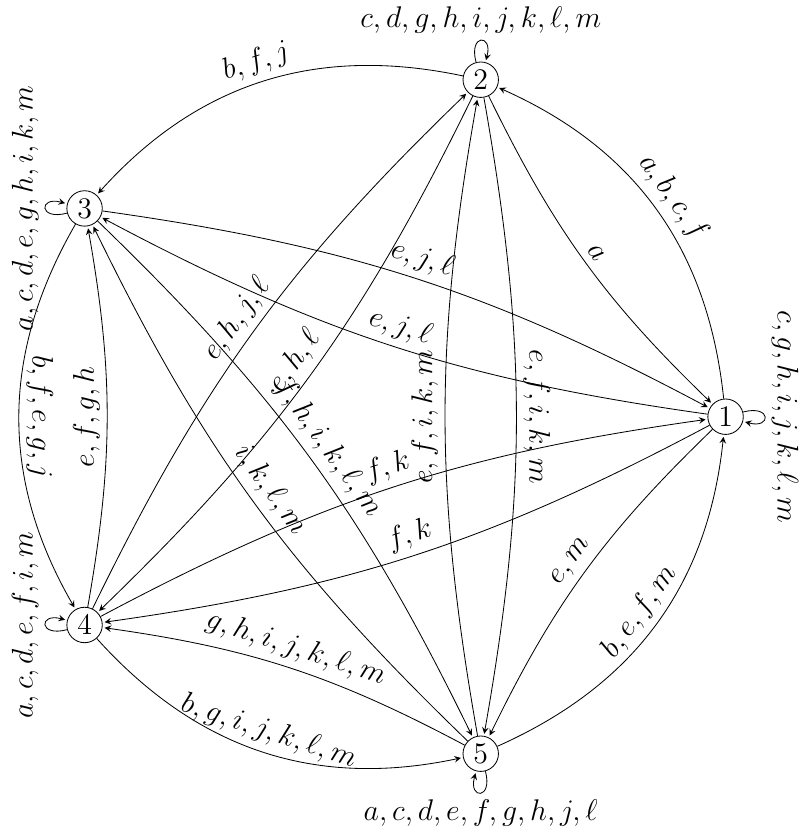}
		\caption{The automaton $\A_5$ on which Algorithm~\ref{alg-trellis} makes at least $2^{25}$ steps and fails to terminate in 48 hours.}
		\label{nfa5states}
	\end{figure}

	Recall that every semigroup $\mathcal{B}_n$ has a minimum number of generators, which are known for $n\le 8$. For larger $n$, neither the generators nor their minimum number is known. However, the number is at least $2^{{n^2}/{4} - O(n)}/(n!)^2$, which improves Theorem~\ref{thm5} by decreasing the number of events of the automaton $\A_n$ from $2^{n^2}$ to an unknown number lower-bounded by the super-exponential function $2^{{n^2}/{4} - O(n)}/(n!)^2$.
 
	The reader may wonder whether the complexity of Algorithm~\ref{alg-trellis} drops if the number of events grows asymptotically slower in the number of states than super-exponentially. Unfortunately, Proposition~6 of \cite{KimRoush1978} implies that the complexity remains super-exponential even if we consider only automata with two observable events.
 
	\begin{cor}
		For every odd $n\ge 1$, there is an automaton $\A_n$ with $n$ states and two observable events, for which Algorithm~\ref{alg-trellis} runs in time $\Omega(n^2 2^{(n^2-1)/{4}})$. \qed
	\end{cor}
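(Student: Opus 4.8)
The plan is to mimic the construction used in the proof of Theorem~\ref{thm5}, but to start from the two-generator subsemigroup guaranteed by Proposition~6 of \cite{KimRoush1978} instead of from the whole semigroup $\mathcal{B}_n$. Fix an odd $n\ge 1$ and let $M_a$ and $M_b$ be the two $n\times n$ binary matrices whose generated subsemigroup $S\subseteq \mathcal{B}_n$ has $\frac{n^2-1}{4}\cdot 2^{(n^2-1)/4}$ elements. First I would build $\A_n=(Q_n,\Sigma_n,\delta_n)$ with $Q_n=\{1,\dots,n\}$ and $\Sigma_n=\{a,b\}$, defining $j\in\delta_n(i,a)$ if and only if the $(i,j)$-entry of $M_a$ equals $1$, and likewise for $b$ and $M_b$. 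Declaring both events observable makes $P$ the identity, so $P(\A_n)=\A_n$ and hence $\mathcal{B}_{P(\A_n)}=\mathcal{B}_{\A_n}$.

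The key step is to identify $\mathcal{B}_{\A_n}$ with $S$. Exactly as in Theorem~\ref{thm5}, every string $w\in\Sigma_n^*$ induces the binary matrix obtained as the boolean product of the matrices of its letters, so the matrices realized by $\A_n$ are precisely the finite products of $M_a$ and $M_b$; that is, $\mathcal{B}_{\A_n}=S$ and $|\mathcal{B}_{\A_n}|=\frac{n^2-1}{4}\cdot 2^{(n^2-1)/4}$.

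It then remains to force Algorithm~\ref{alg-trellis} to enumerate the entire semigroup. To do so I would choose the secret and nonsecret sets so that the instance is IFO---for instance $Q_S=Q_{NS}=\{(q,q)\mid q\in Q_n\}$, as in the footnote following Theorem~\ref{thm5}. On a positive instance the algorithm can never return {\tt false}, so it generates every element of $\mathcal{B}_{\A_n}$ before returning {\tt true}. Since producing and testing each element costs at least a constant, the running time is $\Omega(|\mathcal{B}_{\A_n}|)=\Omega\big(\tfrac{n^2-1}{4}\cdot 2^{(n^2-1)/4}\big)=\Omega\big(n^2\,2^{(n^2-1)/4}\big)$, where the last equality uses $\frac{n^2-1}{4}=\Theta(n^2)$.

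I do not expect any genuine obstacle: the argument is a direct transcription of Theorem~\ref{thm5}, and the only external input is the Kim--Roush size bound, which is quoted verbatim in the preliminaries. The one point that deserves a little care is that the claimed factor $n^2$ originates from the polynomial prefactor $\frac{n^2-1}{4}$ in the size of $S$, and not from the $n^2$ cost of scanning each matrix in the inner loop on lines~4--7; either source suffices, and accounting for the per-element scan as well would only strengthen the bound to $\Omega\big(n^4\,2^{(n^2-1)/4}\big)$.
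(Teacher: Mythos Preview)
Your proposal is correct and follows exactly the route the paper intends: the corollary is stated with a \qed\ and no explicit proof, the preceding sentence merely invoking Proposition~6 of \cite{KimRoush1978}, and your argument is precisely the natural unpacking of that citation along the lines of Theorem~\ref{thm5}. Your observation that the $n^2$ factor already comes from the $\tfrac{n^2-1}{4}$ prefactor in the semigroup size (and that the per-matrix scan would give an additional $n^2$) is a nice clarification that the paper leaves implicit.
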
  
 
 	The results so far consider the worst-case time complexity. From the practical point of view, it is of interest to consider the average time complexity of Algorithm~\ref{alg-trellis}. As an immediate consequence of Theorem~7 of \cite{KimRoush1978}, we obtain the following result. 
 
	\begin{thm}\label{thm4}
		Let $f$ be an integer-valued function satisfying $f(n)>n$ and $\lim_{n\to\infty} f(n)/n^2 = 0$. Consider the set $\mathbb{A}_n$ of all automata with the state set $\{1,\ldots,n\}$ and two observable events where either event appears on $f(n)$ transitions. Then, the average runtime of Algorithm~\ref{alg-trellis} on an automaton chosen uniformly at random from $\mathbb{A}_n$ is $\Omega(2^{n^2/4 +o(n^2)})$. \qed
	\end{thm}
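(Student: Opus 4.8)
The plan is to recognize the statement as a direct translation of Theorem~7 of \cite{KimRoush1978} into the language of Algorithm~\ref{alg-trellis}, along the same lines as the reduction behind Theorem~\ref{thm5}. First I would set up the correspondence between $\mathbb{A}_n$ and pairs of binary matrices: an automaton in $\mathbb{A}_n$ has two observable events, say $a$ and $b$, and each event $e\in\{a,b\}$ is nothing but an $n\times n$ binary matrix $(e_{ij})$ whose number of ones equals the number of transitions on $e$, namely $f(n)$. Hence the uniform distribution on $\mathbb{A}_n$ is precisely the uniform distribution on ordered pairs of $n\times n$ binary matrices, each carrying exactly $f(n)$ ones, which is the distribution considered in Theorem~7 of \cite{KimRoush1978}. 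Moreover, since both events are observable we have $\Sigma_{uo}=\emptyset$, the projection $P$ is the identity, $P(\A)=\A$, and therefore $\mathcal{B}_{P(\A)}=\mathcal{B}_{\A}$ coincides with the subsemigroup of $\mathcal{B}_n$ generated by the two sampled matrices.

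Next I would force the algorithm to enumerate the whole semigroup. Fixing $Q_S=Q_{NS}$ (for instance the set of diagonal pairs) makes every automaton in $\mathbb{A}_n$ trivially IFO, because any secret pair $(s,t)$ is itself a nonsecret pair and witnesses the IFO condition with $w'=w$. As noted before Theorem~\ref{thm5}, on an IFO instance Algorithm~\ref{alg-trellis} never returns {\tt false} early and therefore generates every element of $\mathcal{B}_{P(\A)}$; consequently its runtime is bounded below, pointwise on every $\A\in\mathbb{A}_n$, by $|\mathcal{B}_{\A}|$ (in fact by $|\mathcal{B}_{\A}|\cdot n^2$, but the polynomial factor will be irrelevant here).

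Finally I would take expectations over the uniform choice of $\A\in\mathbb{A}_n$. Since the runtime dominates $|\mathcal{B}_{\A}|$ pointwise, the expected runtime dominates the expected value of $|\mathcal{B}_{\A}|$, which by Theorem~7 of \cite{KimRoush1978} is at least $2^{n^2/4+o(n^2)}$. Absorbing the neglected polynomial factor into the $o(n^2)$ term in the exponent then yields the claimed average runtime $\Omega(2^{n^2/4+o(n^2)})$.

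I expect the only genuine subtlety---rather than an obstacle---to be the step linking the combinatorial statement about semigroup size to a statement about the algorithm's runtime: Theorem~7 of \cite{KimRoush1978} controls the average \emph{cardinality} of the generated semigroup, whereas what is needed is that the algorithm actually \emph{visits} all of those elements. This is exactly what the choice $Q_S=Q_{NS}$ guarantees, so that the pointwise inequality runtime $\ge |\mathcal{B}_{\A}|$ holds for every sampled automaton and survives the passage to the expectation. Everything else is a transcription of the definitions.
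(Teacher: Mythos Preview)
Your proposal is correct and follows exactly the approach the paper intends: the theorem is stated with a \qed\ immediately after the statement, and the paper simply says it is ``an immediate consequence of Theorem~7 of \cite{KimRoush1978}.'' You have spelled out precisely the translation the paper leaves implicit---the identification of $\mathbb{A}_n$ with uniformly random pairs of $n\times n$ binary matrices with $f(n)$ ones each, the use of $Q_S=Q_{NS}$ to force full enumeration of $\mathcal{B}_{\A}$, and the passage from pointwise domination to expectation.
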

 
 	We further discuss experimental results on an extensive number of data in Section~\ref{secExp}. (Results of experiments on randomly generated data are available in the git repository.)

\section{Inclusion-Based IFO Verification}
\label{secObserverVer}
	The algorithm of \cite{BalunMO23} reduces the IFO verification to language inclusion: given two NFAs $\A_1$ and $\A_2$, is $L_m(\A_1)\subseteq L_m(\A_2)$? Although there are different optimization techniques discussed in the literature, all existing algorithms for language inclusion are based on the classical observer (aka subset or powerset) construction. 
 
 	For an automaton $\A=(Q,\Sigma,\delta)$ and a relation $S\subseteq Q\times Q$, we denote by $S'$ the relation over $Q\times 2^Q$ defined by
	\begin{align}\label{set01}
		S' & = \{ (s,T) \mid f\in T \text{ if and only if } (s,f) \in S \}
	\end{align}
	that clusters the elements of $S$ with respect to the domain.
 	To reduce IFO to language inclusion, let $Q_{S},Q_{NS}\subseteq Q\times Q$ be the sets of secret and nonsecret pairs, respectively. These sets can be clustered as shown in \eqref{set01}, resulting in the sets $Q_S'$ and $Q_{NS}'$. Based on these sets, we define the languages
 	\begin{equation}\label{eq3}
		\begin{aligned}
			L_S & := \bigcup_{(s,T)\in Q_S'} L_m(\A,s,T)  \quad\text{ and } \\
			L_{NS} & := \bigcup_{(s,T)\in Q_{NS}'} L_m(\A,s,T)\,.
		\end{aligned}
	\end{equation}
	The verification of IFO now reduces to the verification of the inclusion $P(L_S)\subseteq P(L_{NS})$, see \cite{BalunMO23}.
 
	In the sequel, we distinguish two cases of the verification of language inclusion: (i) the classical approach based on the observer construction, and (ii) approaches reducing the state space of the constructed observer.

	{\bf Observer-based IFO verification.}
	The languages $L_S$ and $L_{NS}$ of \eqref{eq3} can be represented by NFAs $\A_S$ and $\A_{NS}$, respectively, consisting of no more than $n$ copies of the automaton $\A$, with possibly different initial and final states; see Algorithm~\ref{alg1}. As a result, the NFAs $\A_S$ and $\A_{NS}$ have no more than $n^2$ states each, and the classical verification of the language inclusion $P(L_S)\subseteq P(L_{NS})$ based on the observer construction is of time complexity $O(n^2 2^{n^2}) = O^*(2^{n^2})$. 
	
	Note that the upper bound on the time complexity of Algorithm~\ref{alg1} coincides with the upper bound on the time complexity of Algorithm~\ref{alg-trellis}. It is not a coincidence. We show that Algorithm~\ref{alg-trellis} is a special case of Algorithm~\ref{alg1}.

	\begin{algorithm}\hrule height .08em\vspace{-5pt}
		\caption{Classical observer-based IFO verification}
		\label{alg1}
		\begin{algorithmic}[1]
			\vspace{2pt}\hrule\vspace{3pt}
			\Require An automaton $\A=(Q,\Sigma,\delta)$, 
			\Statex sets $Q_S,Q_{NS}\subseteq Q\times 2^Q$ of secret and nonsecret pairs, 
			\Statex and the alphabet $\Sigma_o\subseteq \Sigma$ of observable events.

			\Ensure {\tt true} iff $\A$ is IFO wrt $Q_S$, $Q_{NS}$, and $P\colon \Sigma^*\to\Sigma_o^*$.

			\State Let $\A_S$ be disjoint union of $\A[s,T]$, for $(s,T)\in Q_S$. 
   
			\State Let $\A_{NS}$ be disjoint union of $\A[s,T]$, for $(s,T)\in Q_{NS}$.
			
			\State Compute the observer $\A_{NS}^{obs}$ of $\A_{NS}$
			
			\State \Return $L_m(\A_S) \times L_m(\A_{NS}^{obs}) = \emptyset$?
		\end{algorithmic}
		\hrule height .08em
	\end{algorithm}
 
	\begin{lem}\label{lem9}
  		For every automaton $\A=(Q,\Sigma,\delta)$, an alphabet $\Sigma_o\subseteq \Sigma$ of observable events, and sets $Q_S,Q_{NS}\subseteq Q\times 2^Q$ of secret and nonsecret pairs of states, respectively, there is an automaton $\A'$ and a bijection between the elements of the semigroup $\mathcal{B}_{P(\A)}$ and the states of the observer of $\A'$.
 	\end{lem}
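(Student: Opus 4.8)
The plan is to realize the elements of $\mathcal{B}_{P(\A)}$ as the reachable states of a single observer. Recall that each element of $\mathcal{B}_{P(\A)}$ is the matrix $(w_{ij})$ of the relation $R_w=\{(i,j)\in Q\times Q \mid j\in \delta_{P(\A)}(i,w)\}$ for some $w\in\Sigma_o^*$. I would take $\A'$ to be the disjoint union of $n=|Q|$ copies of $\A$, one copy $\A^{(i)}$ for each $i\in Q$, where the state $q$ of the copy $\A^{(i)}$ is identified with the pair $(i,q)\in Q\times Q$, and the transitions are inherited from $\A$ within each copy: on an event $a$ there is a transition from $(i,q)$ to $(i,q')$ exactly when $q'\in\delta(q,a)$. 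The observable/unobservable partition of $\Sigma$ is kept unchanged, so $P(\A')$ is the disjoint union of the projected copies of $P(\A)$. The relevant initial set is the diagonal $\Delta=\{(i,i)\mid i\in Q\}$, which selects in each copy $\A^{(i)}$ the initial state $i$; this is exactly the union-of-copies shape used to build $\A_S$ and $\A_{NS}$ in Algorithm~\ref{alg1}, so the observer of $\A'$ starts from $\Delta$.

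First I would prove, by induction on $|w|$, that the subset construction applied to $P(\A')$ from $\Delta$ reaches, after reading $w\in\Sigma_o^*$, exactly the set $\phi(R_w):=\{(i,j)\mid (i,j)\in R_w\}$. The base case $w=\eps$ is the identity relation $\Delta$, obtained once the unobservable closure of the observer has been applied. For the inductive step, appending one observable event $a$ to $w$, a single observer transition from $\phi(R_w)$ yields $\{(i,j)\mid \text{there is }k\text{ with }(i,k)\in R_w \text{ and } j\in \delta_{P(\A)}(k,a)\}$, because the first coordinate is frozen within each disjoint copy while the second evolves precisely as in $P(\A)$. This set is $\phi(R_{wa})$, and by the definition of boolean multiplication it coincides with the matrix product $(w_{ij})(a_{ij})$, that is, $R_{wa}=R_w\circ R_a$. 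Hence the reachable states of the observer of $\A'$ are exactly $\{\phi(R_w)\mid w\in\Sigma_o^*\}$.

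It then remains to observe that $\phi$, viewed as the identification of an $n\times n$ binary matrix with the corresponding subset of $Q\times Q$, is a bijection between $\mathcal{B}_{P(\A)}=\{R_w\mid w\in\Sigma_o^*\}$ and this set of reachable observer states. Injectivity is immediate, since distinct matrices correspond to distinct relations; surjectivity in both directions follows from the correspondence $w\mapsto R_w$ established above, with the empty matrix (if present) mapping to the sink state $\emptyset$ of the observer. The main obstacle I anticipate is purely bookkeeping: making the step between a single generator of $\mathcal{B}_{P(\A)}$ and a single observer transition fully rigorous requires carefully matching the $\eps$\nobreakdash-elimination performed both in forming $P(\A)$ and in the observer construction, and reconciling the empty-string relation $R_\eps=\Delta$ (the initial observer state) with the convention, adopted in the preliminaries, that $\mathcal{B}_{P(\A)}$ contains the matrix of every $w\in\Sigma_o^*$, including $\eps$.
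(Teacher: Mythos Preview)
Your proposal is correct and follows essentially the same approach as the paper: both define $\A'$ as the disjoint union of $|Q|$ copies of $\A$, one initialized at each state $i\in Q$, and then argue that the observer state reached under $w\in\Sigma_o^*$ encodes exactly the relation $R_w$ (equivalently, the matrix $(w_{ij})$). Your inductive argument is somewhat more explicit than the paper's, and the bookkeeping concern you raise about $R_\eps$ versus the semigroup convention is a genuine minor imprecision that the paper's proof also leaves implicit.
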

 	\begin{proof}
		Let $P\colon \Sigma^* \to \Sigma_o^*$ be the projection erasing unobservable events, and let $\mathcal{G}_{\A} = \{ (a_{ij}) \mid a \in \Sigma_o$, and $a_{ij}=1$ if and only if $j\in \delta(i,a)\}$ be the set of binary matrices corresponding to the events of $\Sigma_o$. Then, $\mathcal{G}_{\A}$ is a set of generators of $\mathcal{B}_{P(\A)}$, and hence every matrix $(m_{ij}) \in \mathcal{B}_{P(\A)}$ is a product of some of the matrices of $\mathcal{G}_{\A}$; say $(m_{ij})=({a_1})({a_2})\cdots ({a_k})$. Let $m=a_1a_2\cdots a_k$ be the corresponding string over $\Sigma_o$.
 		We denote the $i$th row of the matrix $(m_{ij})$ by $(m_{i*})$. For $i\in Q$, let $\A_i$ denote a copy of $\A$ with $i$ being the single initial state.  Then, the positions of ones in $(m_{i*})$ correspond to the states of $\A_i$ reachable from the state $i$ under the strings from $P^{-1}(m)$, which corresponds to the state of the observer of $\A_i$ reachable from the state $\{i\}$ under the string $m$.
 		
 		We define the automaton $\A'$ as a disjoint union of automata $\A_i$, for all $i\in Q$, and denote by $I'$ the initial states of $\A'$ formed by the initial states of individual $\A_i$'s as renamed by the operation of disjoint union. Then, the positions of ones in $(m_{ij})$ correspond to the state of the observer of $\A'$ reachable from $I'$ under the string $m$. 
 		
 		On the other hand, the state of the observer of $\A'$ reachable from the initial state $I'$ under a string $m$ corresponds to the ones in the matrix $(m_{ij})$. 
 		
 		Thus, there is a one-to-one correspondence between the elements of $\mathcal{B}_{P(\A)}$ and the states of the observer of $\A'$.
	\end{proof}
 
	As a consequence of Lemma~\ref{lem9}, we obtain for Algorithm~\ref{alg1} the same results as for Algorithm~\ref{alg-trellis}. Namely, for every $n\ge 1$, there is an automaton $\A_n$ with $n$ states and super-exponentially many events with respect to $n$ such that the observer of the NFA $\A_{NS}$ of Algorithm~\ref{alg1} has $2^{n^2}$ states; hence, Algorithm~\ref{alg1} has to make at least $2^{n^2}$ steps.
	Further, the worst-case time complexity of Algorithm~\ref{alg1} is $\Omega(n^2 2^{(n^2-1)/4})$ even if the automata are over a binary alphabet, and its average time complexity is super-exponential in the number of states of the automaton if the number of transitions grows with the number of states.
	Similarly to the tool implementing Algorithm~\ref{alg-trellis}, our tool implementing Algorithm~\ref{alg1} fails to terminate on the automaton of Figure~\ref{nfa5states} in 48 hours.
	
	{\bf Inclusion-based IFO verification.}
 	The reduction to language inclusion is of independent interest, and we formulate it as Algorithm~\ref{alg3}. The inclusion problem has been widely investigated in the literature, resulting in many techniques and tools. Algorithm~\ref{alg3} is thus a class of algorithms using different tools  on line~3. We consider the state-of-the-art tools in Section~\ref{secExp}. For an illustration, these tools require less than a second to verify whether the five-state automaton of Figure~\ref{nfa5states} is IFO; recall that the tools implementing Algorithms~\ref{alg-trellis} and~\ref{alg1} failed to terminate in 48 hours. 
 	
 	The main idea of the tools is to cut the state space of the constructed observer by keeping only selected states. Imagine, for instance, two sets of states $X\subseteq Y$ computed by the observer. We can keep only $Y$ with the justification that whatever can be computed from $X$ can also be computed from $Y$.
 	
	\begin{algorithm}\hrule height .08em\vspace{-5pt}
	\caption{General inclusion-based IFO verification}
	\label{alg3}
	\begin{algorithmic}[1]
		\vspace{2pt}\hrule\vspace{3pt}
		\Require An automaton $\A=(Q,\Sigma,\delta)$, sets $Q_S,Q_{NS}\subseteq Q\times 2^Q$, and $\Sigma_o\subseteq \Sigma$.

		\Ensure {\tt true} iff $\A$ is IFO wrt $Q_S$, $Q_{NS}$, and $P\colon \Sigma^*\to\Sigma_o^*$.

		\State Let $\A_S$ be disjoint union of $\A[s,T]$, for $(s,T)\in Q_S$. 

		\State Let $\A_{NS}$ be disjoint union of $\A[s,T]$, for $(s,T)\in Q_{NS}$.

		\State \Return $L_m(\A_S) \subseteq L_m(\A_{NS}^{obs})$?
	\end{algorithmic}
	\hrule height .08em
	\end{algorithm}

	Although the language-inclusion tools are quite efficient in practice, see Section~\ref{secExp}, they are still based on the observer construction. Consequently, the upper bound on the time complexity of Algorithm~\ref{alg3} coincides with that of Algorithms~\ref{alg-trellis} and~\ref{alg1}. However, compared with Algorithms~\ref{alg-trellis} and~\ref{alg1}, it is an open problem whether this time complexity is also tight for Algorithm~\ref{alg3}.

\section{Special Cases}\label{sec6}
	In this section, we discuss the worst-case time complexity of Algorithms~\ref{alg-trellis} and~\ref{alg1} for several special cases.

	{\bf Nonsecret pairs in the form of a Cartesian product.}
 	As an immediate consequence of Algorithm~\ref{alg1}, we obtain that if the set $Q_{NS}$ is a Cartesian product of states, i.e., $Q_{NS} = I \times F$, then the NFA for $L_{NS}$ coincides with the input automaton $\A$ where the states of $I$ are initial and the states of $F$ are final, i.e., with $\A[I,F]$. In particular, the NFA recognizing the language $P(L_{NS})$ has $n$ states, and hence the inclusion $P(L_S)\subseteq P(L_{NS})$ can be tested in time $O(n^2 2^n)=O^*(2^n)$. Thus, we have the following result of \cite{BalunMO23}, which improves a similar result of \cite{WuLafortune2013}.

	\begin{cor}\label{cor4}
		The IFO property of an automaton $\A=(Q,\Sigma,\delta)$ with respect to secret pairs $Q_S \subseteq Q \times Q$, nonsecret pairs $Q_{NS}=I\times F \subseteq Q\times Q$, and a projection $P\colon \Sigma^* \to \Sigma_o^*$ can be verified in time $O(n^2 2^n)$. \qed
	\end{cor}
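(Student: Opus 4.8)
The plan is to invoke the reduction of IFO verification to the language inclusion $P(L_S)\subseteq P(L_{NS})$ of \eqref{eq3}, and to exploit the Cartesian structure of $Q_{NS}=I\times F$ to show that the right-hand side is recognized by an NFA with only $n$ states rather than $n^2$. The crucial observation, and the step I expect to carry the whole argument, is to compute the clustered set $Q_{NS}'$ of \eqref{set01} for this special form of $Q_{NS}$.

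First I would determine $Q_{NS}'$. For a source state $s\in I$, the pair $(s,f)$ lies in $Q_{NS}$ exactly when $f\in F$, so the unique cluster target is $T=F$ and $(s,F)\in Q_{NS}'$. For $s\notin I$, no pair $(s,f)$ belongs to $Q_{NS}$, so $T=\emptyset$ and $(s,\emptyset)\in Q_{NS}'$. Substituting into the definition of $L_{NS}$, the clusters with empty target contribute the empty language, while the clusters $(s,F)$ with $s\in I$ contribute $L_m(\A,s,F)$; hence
\begin{equation*}
	L_{NS}=\bigcup_{s\in I} L_m(\A,s,F)=L_m(\A,I,F),
\end{equation*}
which is exactly the language of the single NFA $\A[I,F]$ on the state set $Q$.

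Next I would apply the projection and the observer-based inclusion test of Algorithm~\ref{alg1}. The automaton $\A_S$ for $L_S$ still has at most $n^2$ states, and $P(\A_S)$ recognizes $P(L_S)$ with the same number of states. On the nonsecret side, $P(\A[I,F])$ recognizes $P(L_{NS})$ and has $n$ states, so its observer is a DFA with at most $2^n$ states. Testing $P(L_S)\subseteq P(L_{NS})$ then amounts to complementing this (already deterministic) observer, forming its product with $P(\A_S)$, and checking emptiness. The product has at most $n^2\cdot 2^n$ states and emptiness is linear in its size, yielding the claimed bound $O(n^2 2^n)=O^*(2^n)$.

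The main obstacle is conceptual rather than computational: one must recognize that the clustering operation of \eqref{set01}, which in general spreads $Q_{NS}$ over up to $n$ distinct targets and forces the $n^2$-state construction, degenerates when $Q_{NS}=I\times F$ into a single target $F$ shared by all relevant sources. This collapse is precisely what lets $\A[I,F]$ replace the disjoint union of $n$ copies of $\A$, and it is what drops the exponent in the running time from $n^2$ to $n$.
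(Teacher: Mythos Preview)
Your proposal is correct and follows essentially the same approach as the paper: recognize that when $Q_{NS}=I\times F$ the NFA for $L_{NS}$ collapses to the single automaton $\A[I,F]$ on $n$ states, so the observer on the nonsecret side has at most $2^n$ states and the product with the $n^2$-state automaton $\A_S$ gives the $O(n^2 2^n)$ bound. Your explicit computation of $Q_{NS}'$ via \eqref{set01} is more detailed than the paper's one-line remark, but the argument is the same.
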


	{\bf Deterministic automata.}
 	Another special case arises for deterministic automata with full observation. Given a deterministic automaton $\A=(Q,\Sigma,\delta)$ and a set $S\subseteq Q\times 2^Q$, we first study the number of states of the minimal DFA recognizing the language $\bigcup_{(s,T)\in S} L_m(\A,s,T)$.

	\begin{lem}\label{thmUnionDFAupper}
		For a deterministic automaton $\A$ with an $n$-element state set $Q$ and a set $S\subseteq Q \times 2^{Q}$, the minimal DFA recognizing the language $\bigcup_{(s,T)\in S} L_m(\A,s,T)$ has no more than $(n+1)^{n}$ states.
	\end{lem}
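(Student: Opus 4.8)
The plan is to prove the bound via the Myhill--Nerode theorem, identifying the right-congruence classes of the language
$L := \bigcup_{(s,T)\in S} L_m(\A,s,T)$
with (a subset of) the partial transformations recalled in the preliminaries. The guiding intuition is that the only information a prefix $w$ carries about whether an extension $wx$ lands in $L$ is the partial transformation $\delta_w$ it induces on $Q$.

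First I would make this precise by unfolding the definition of $L$. Since $\A$ is deterministic, each $w\in\Sigma^*$ induces a partial transformation $\delta_w$ on $Q$, and for any continuation $x$ we have $wx\in L$ if and only if there is a pair $(s,T)\in S$ such that $\delta_w(s)$ is defined, say $\delta_w(s)=q$, and $\delta(q,x)\in T$. Because the set $S$ and the transition function $\delta$ are fixed throughout, this shows that the right residual $w^{-1}L=\{x\mid wx\in L\}$ is determined by $w$ only through $\delta_w$. Consequently, if $\delta_{w_1}=\delta_{w_2}$, then $w_1$ and $w_2$ are Myhill--Nerode equivalent, so the number of equivalence classes is at most the number of distinct partial transformations of the form $\delta_w$.

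It then remains to bound that number. Every $\delta_w$ lies in the partial-transformation semigroup $\mathcal{PT}_n$, whose cardinality is $(n+1)^n$ as noted in the preliminaries, so the index of the right-congruence is at most $(n+1)^n$; by Myhill--Nerode the minimal DFA for $L$ has no more than $(n+1)^n$ states. I do not expect a substantial obstacle here. The only point requiring care is the bookkeeping of the partiality of $\delta_w$: a starting state $s$ from which $w$ cannot be read simply contributes nothing to membership, which is exactly why the ``undefined'' value must be counted as one of the $n+1$ options for each of the $n$ arguments, giving the factor $(n+1)^n$ rather than $n^n$.
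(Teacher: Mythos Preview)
Your proof is correct. The core insight---that the only information a prefix $w$ carries is the partial transformation $\delta_w\in\mathcal{PT}_n$---is the same one the paper exploits, but the packaging differs. The paper argues constructively: it forms the disjoint union $\A'$ of the copies $\A[s,T]$, reduces to $|S|\le n$ copies by merging pairs with the same first component, and then observes that each state of the observer of $\A'$ is an $|S|$-tuple recording, for each starting state $s$, the current state of that copy (or ``inactive''), giving at most $(n+1)^{|S|}\le(n+1)^n$ observer states. You instead go straight through Myhill--Nerode, bounding the number of right residuals by $|\mathcal{PT}_n|=(n+1)^n$. Your route is a bit more direct and avoids building the intermediate NFA; the paper's route has the mild advantage of actually exhibiting a DFA achieving the bound (the observer of $\A'$), and its intermediate bound $(n+1)^{|S|}$ is slightly sharper when $|S|<n$.
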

	\begin{proof}
		The language $\bigcup_{(s,T)\in S} L_m(\A,s,T)$ can be represented by an NFA $\A'$ constructed as a disjoint union of $\A[s,T]$, for $(s,T)\in S$. Without loss of generality, we assume that $|S|\le |Q|$; if there were two pairs $(s,T_1),(s,T_2)\in S$, we could replace them by a single pair $(s,T_1\cup T_2)$. Thus, $\A'$ consists of $|S|$ deterministic components, and hence the state set of the observer of $\A'$ consists of $|S|$-tuples, where on each position, we have either the state of that component, or the information that the component is no longer active. Therefore, the observer of $\A'$ has at most $(n+1)^{n}$ states. Since the minimal DFA that is language equivalent to $\A'$ does not have more states than the observer of $\A'$, the proof is complete.
	\end{proof}
 
 	Analogously as the semigroup $\mathcal{B}_n$ is related to automata, is the partial-transformation semigroup $\mathcal{PT}_n$ related to deterministic automata without unobservable events. Therefore, replacing the semigroup $\mathcal{B}_n$ in Algorithm~\ref{alg-trellis} by the semigroup $\mathcal{PT}_n$ implies that the tight worst-case time complexity of Algorithm~\ref{alg-trellis} for deterministic automata with full observation is $O((n+1)^n)$. 
 	
 	Similarly, we immediately have the following theorem giving us the worst-case time complexity of Algorithm~\ref{alg1}.
	\begin{thm}
		If $\A$ is an $n$-state deterministic automaton without unobservable events, then the time complexity of Algorithm~\ref{alg1} is bounded from above by $O(n^2 (n+1)^n)$ and from below by $(n+1)^n$.
	\end{thm}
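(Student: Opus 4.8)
The plan is to prove the two bounds separately, obtaining the upper bound as a direct consequence of Lemma~\ref{thmUnionDFAupper} and the lower bound as the deterministic analog of Theorem~\ref{thm5}, with the partial-transformation semigroup $\mathcal{PT}_n$ taking over the role played by $\mathcal{B}_n$ in the general case.

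For the upper bound, I would first note that since $\A$ is deterministic and all of its events are observable, the projection $P$ is the identity, so $P(\A)=\A$ and each component $\A[s,T]$ of the disjoint union $\A_{NS}$ is deterministic. Applying Lemma~\ref{thmUnionDFAupper} to the language $L_{NS}$ bounds the number of states of the observer $\A_{NS}^{obs}$ by $(n+1)^{n}$. Since $\A_S$ is a disjoint union of at most $n$ copies of $\A$, it has at most $n^2$ states, so the product automaton computed on line~4 (testing $L_m(\A_S)\cap\overline{L_m(\A_{NS}^{obs})}=\emptyset$, the complement being cheap because $\A_{NS}^{obs}$ is deterministic) has at most $n^2(n+1)^{n}$ states. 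As the emptiness test amounts to a reachability check linear in the size of the product, the overall running time is $O(n^2(n+1)^{n})$.

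For the lower bound, I would mimic the construction of Theorem~\ref{thm5} but replace $\mathcal{B}_n$ by $\mathcal{PT}_n$. Concretely, I would build a deterministic automaton $\A_n=(Q_n,\Sigma_n,\delta_n)$ on $Q_n=\{1,\ldots,n\}$ whose events realize the (four) generators of $\mathcal{PT}_n$, so that the monoid of partial transformations $\{\delta_w\mid w\in\Sigma_n^*\}$ equals $\mathcal{PT}_n$. Taking $Q_{NS}=\{(i,Q_n)\mid i\in Q_n\}$ (and $Q_S=Q_{NS}$ so that the instance is IFO), the automaton $\A_{NS}$ becomes exactly the disjoint union of the single-initial-state copies $\A_i$ appearing in Lemma~\ref{lem9}. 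Invoking the deterministic specialization of Lemma~\ref{lem9}, whose reachable observer states are in bijection with $\{\delta_w\mid w\in\Sigma_n^*\}=\mathcal{PT}_n$, the observer $\A_{NS}^{obs}$ has $(n+1)^{n}$ reachable states; since line~3 of Algorithm~\ref{alg1} constructs this observer in full before any emptiness test, the algorithm performs at least $(n+1)^{n}$ steps.

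The main obstacle is the lower bound, and within it the verification that all $(n+1)^{n}$ partial transformations actually occur as reachable observer states. This rests on two ingredients: first, the cited fact that $\mathcal{PT}_n$ is generated by finitely many transformations, which guarantees that such an automaton $\A_n$ exists; and second, the observation that, because each copy $\A_i$ is deterministic, its contribution to an observer state is a single current state (or the information that it is no longer active), so that the observer state reached under a string $w$ encodes precisely the partial transformation $i\mapsto\delta(i,w)$, and distinct transformations yield distinct observer states. The initial observer state encodes the identity transformation, and reading products of the generators traverses the whole submonoid they generate, namely all of $\mathcal{PT}_n$.
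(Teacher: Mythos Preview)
Your proposal is correct and follows essentially the same approach as the paper: the upper bound via Lemma~\ref{thmUnionDFAupper} together with the $n^2$ bound on $\A_S$, and the lower bound by realizing the generators of $\mathcal{PT}_n$ as events of a deterministic automaton so that the observer of the $n$-fold disjoint union has $(n+1)^n$ reachable states. The only cosmetic differences are that the paper chooses $Q_{NS}=\{(i,\{i\})\mid i\in Q\}$ rather than your $\{(i,Q_n)\}$ (irrelevant, since the observer's state set does not depend on final states) and argues the bijection directly instead of invoking Lemma~\ref{lem9}.
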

	\begin{proof}
		Consider the languages $L_S$ and $L_{NS}$ as defined in \eqref{eq3}. Lemma~\ref{thmUnionDFAupper} shows that the number of states of the observer of the automaton for the language $L_{NS}$ is at most $(n+1)^n$. Since the number of states of the automaton for $L_S$ is at most $n^2$, the time complexity of Algorithm~\ref{alg1} is $O(n^2 (n+1)^n)$.
	
		For the other part, we consider the semigroup $\mathcal{PT}_n$ generated by four transformations, say $t_1, t_2, t_3, t_4$. We construct a deterministic automaton $\A$ with $n$ states $Q=\{1,\ldots,n\}$ by defining the alphabet $\Sigma=\{t_1, t_2, t_3, t_4\}$ and the transition function $\delta(q,t_i)=t_i(q)$, for every $q\in Q$ and every $t_i \in\{t_1, t_2, t_3, t_4\}$. Since every transformation $t\in \mathcal{PT}_n$ is a composition of the four transformations $t_1, t_2, t_3, t_4$, we can see $t$ as a string $w_t$ over $\Sigma$, and hence $\delta(q,w_t)=t(q)$, for every $q\in Q$. Then, for $Q_{NS}=\{(i,i) \mid i \in Q\}$, the automaton $\A_{NS}$ of Algorithm~\ref{alg1} is a disjoint union of automata $\A_1, \ldots, \A_n$, where $\A_i=\A[i,i]$ is a copy of $\A$ with state $i$ initial and final. Then, the observer of $\A_{NS}$ has as many states as there are transformations in $\mathcal{PT}_n$, which is $(n+1)^n$, and hence Algorithm~\ref{alg1} makes at least that many steps.
	\end{proof}

	{\bf Observer property.}
 	The \emph{observer property\/} was introduced by \cite{WW96} and, as pointed out by \cite{FengW10}, is equivalent to the \emph{observation equivalence\/} in \cite{HennessyM80}. 
	Given a DFA $\A$ over $\Sigma$ generating the language $L$ and accepting the language $L_m$, and a set of observable events $\Sigma_o \subseteq \Sigma$, the projection $P\colon \Sigma^* \to \Sigma_o^*$ is an \emph{$L_m$-observer\/} if for all strings $t\in P (L_m)$ and $s\in L$, whenever the string $P(s)$ is a prefix of $t$, then there exists a string $u\in \Sigma^*$ such that $su\in L_m$ and $P(su) = t$.

 	\cite{wong98} has shown that, under the observer property, the observer of $\A$ does not have more states than the automaton $\A$ itself, and that the observer of $\A$ may be computed in polynomial time with respect to the size of $\A$; see also \cite{WongW04} and \cite{FengW10}. In combination with Algorithm~\ref{alg1}, we have the following.
 
 	\begin{thm}
  		Given a deterministic automaton $\A$ over $\Sigma$, the sets $Q_S$ and $Q_{NS}$ of secret and nonsecret pairs, and a set of observable events $\Sigma_o\subseteq \Sigma$, let $\A_{NS}$ be the disjoint union of automata $\A[s,T]$, for $(s,T)\in Q_{NS}$. If the projection $P\colon \Sigma^* \to \Sigma_o^*$ is an $L_m(\A_{NS})$-observer, then the time to verify IFO is polynomial. \qed
	\end{thm}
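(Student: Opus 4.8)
The plan is to run Algorithm~\ref{alg1} on the given input and show that, under the observer-property hypothesis, each of its four lines executes in time polynomial in $n$, the number of states of $\A$. The whole argument is an assembly of polynomial bounds, with the single exponential step of the algorithm---the observer construction on line~3---being tamed by the result of \cite{wong98}.

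First I would bound the sizes of the automata built on lines~1--2. As in the proof of Lemma~\ref{thmUnionDFAupper}, any two pairs $(s,T_1),(s,T_2)$ sharing a first component can be merged into $(s,T_1\cup T_2)$ without changing the accepted language, so without loss of generality both $Q_S$ and $Q_{NS}$ contain at most $n$ pairs with distinct first components. Consequently $\A_S$ and $\A_{NS}$ are disjoint unions of at most $n$ copies of $\A$, and each has at most $n^2$ states; lines~1--2 are therefore polynomial. The decisive step is line~3. Here I would invoke \cite{wong98}: since $P$ is an $L_m(\A_{NS})$-observer, the observer $\A_{NS}^{obs}$ has no more states than $\A_{NS}$ itself---hence at most $n^2$ states---and it can be computed in time polynomial in the size of $\A_{NS}$, thus polynomial in $n$. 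This is the only place where the hypothesis is used, and it is exactly what replaces the worst-case exponential blow-up of the subset construction by a polynomial one.

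Finally, line~4 decides the inclusion $L_m(\A_S)\subseteq L_m(\A_{NS}^{obs})$. Because $\A_{NS}^{obs}$ is deterministic, I would complete it and complement it in polynomial time, form the product with $\A_S$ (at most $n^2\cdot n^2$ states), and test emptiness in time linear in the size of the product. Combining the polynomial bounds obtained for all four lines yields the claimed polynomial running time, and completes the proof.

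The one point requiring care---rather than a genuine obstacle---is that $\A_{NS}$ is in general an NFA, since it may carry several initial states, whereas the $L_m$-observer notion and the state bound of \cite{wong98} are phrased for a single (deterministic) automaton. The statement sidesteps this by imposing the observer property directly on $L_m(\A_{NS})$, so the cited bound applies verbatim to $\A_{NS}^{obs}$; one only has to note that the polynomial-time computation of the observer, and its at-most-$|\A_{NS}|$-state guarantee, carry over to this disjoint-union automaton, after which no further work is needed.
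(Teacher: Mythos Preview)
Your proposal is correct and follows exactly the approach the paper intends: the theorem is stated with a \qed because it is an immediate consequence of the preceding sentence---\cite{wong98}'s bound on the observer under the $L_m$-observer property combined with the polynomial sizes of $\A_S$ and $\A_{NS}$ in Algorithm~\ref{alg1}. Your write-up simply spells out the per-line bounds, and your closing caveat about $\A_{NS}$ being an NFA is precisely the point the paper addresses in the Remark immediately following the theorem.
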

	
	\begin{rem}
		Although the definition of the observer property is for DFAs, it can be applied to NFAs. Indeed, every nondeterministic choice $(p,a,q)$ and $(p,a,r)$, with $q\neq r$, can be replaced by three transitions $(p,u,p')$, $(p',a,q)$, and $(p,a,r)$, where $p'$ is a new state and $u$ is a new unobservable event. The construction results in a DFA, the observer of which is isomorphic to the observer of the original automaton.
	\end{rem}

	{\bf Partially ordered automata.}
	Partially ordered automata, aka 1-weak, very weak, linear, acyclic, or extensive automata, see, e.g., \cite{MasopustK21}, are automata where the transition relation induces a partial order on states. This restriction implies that as soon as a state is left during the computation, it is never visited again. Said differently, the only cycles in the automaton are self-loops. Let $\A = (Q, \Sigma, \delta)$ be an automaton. The reachability relation $\le$ on states is defined by setting $p \le q$ if there is a string $w\in \Sigma^*$ such that $q \in \delta(p, w)$. The automaton $\A$ is \emph{partially ordered} if the reachability relation $\le$ is a partial order.
 
	Partially ordered automata recognize a subclass of regular languages strictly included in \emph{star-free languages}, see \cite{BrzozowskiF80} and \cite{mfcs16:mktmmt_full}. Star-free languages are languages definable by \emph{linear temporal logic}, which is a logic widely used as a specification language in automated verification. 

	When we order the states of a partially ordered automaton by the reachability relation $\le$, the matrices of binary relations corresponding to strings over the partially ordered automaton are upper triangular. Upper triangular binary matrices form a semigroup, denoted by $\mathcal{UT}_n$, that has $2^{n(n+1)/2}$ elements and a (unique) minimal generating set with $n(n+1)/2+1$ elements; one of the generators is the identity matrix, see \cite{hivert2021minimal}. Therefore, for every $n\ge 1$, there is a partially ordered automaton $\A_n$ with $n$ states and $n(n+1)/2$ events (we do not need an event corresponding to the identity matrix) such that (i) the semigroup $\mathcal{B}_{P(\A_n)}$ constructed in Algorithm~\ref{alg-trellis} has $2^{n(n+1)/2}$ elements, and (ii) the observer constructed in Algorithm~\ref{alg1} has $2^{n(n+1)/2}$ states. Consequently, we have the following super-exponential worst-case time complexity of the algorithms.
 
	\begin{thm}
		The worst-case time complexity of Algorithms~\ref{alg-trellis} and~\ref{alg1} for partially ordered automata is $O^*{(2^{n(n+1)/2})}$, and this result is tight. \qed
	\end{thm}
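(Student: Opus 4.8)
The plan is to prove the two matching inequalities separately: an upper bound $O^*(2^{n(n+1)/2})$ holding for every partially ordered automaton, and a matching lower bound $\Omega(2^{n(n+1)/2})$ witnessed by the family $\A_n$ described in the paragraph preceding the theorem. The single structural fact driving both directions is that, after fixing a topological ordering of the states (a linear extension of the reachability partial order $\le$), every transition of $P(\A)$ runs from a state to a state of equal or larger index. I would first check that $P(\A)$ inherits partial-orderedness from $\A$, since $\varepsilon$-elimination only replaces a pair $p\to_\varepsilon q\to_a r$ by a transition $(p,a,r)$, and such $p$ already satisfies $p\le r$, so no new cycles are created. Consequently the relation matrix $(w_{ij})$ of every string $w$ is upper triangular, and therefore $\mathcal{B}_{P(\A)}\subseteq \mathcal{UT}_n$, giving $|\mathcal{B}_{P(\A)}|\le 2^{n(n+1)/2}$.

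For the upper bound on Algorithm~\ref{alg-trellis} I would substitute this size bound into its running time $O(|\mathcal{B}_{P(\A)}|\cdot n^2)$ established earlier, yielding $O(n^2 2^{n(n+1)/2})=O^*(2^{n(n+1)/2})$. For Algorithm~\ref{alg1} I would invoke the correspondence of Lemma~\ref{lem9} between the elements of $\mathcal{B}_{P(\A)}$ and the states of the relevant observer: since every reachable observer state of $\A_{NS}$ is determined by the rows of the corresponding upper-triangular matrix, the observer has at most $2^{n(n+1)/2}$ states. Taking its product with $\A_S$, whose size is at most $n^2$, and testing emptiness in linear time then gives the same bound $O(n^2 2^{n(n+1)/2})=O^*(2^{n(n+1)/2})$.

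For tightness I would use the automaton $\A_n$ from the preceding paragraph, built from the minimal generating set of $\mathcal{UT}_n$ by taking one event per non-identity generator, so that $\A_n$ has $n$ states and $n(n+1)/2$ events; the identity matrix is supplied by the empty string, hence $\mathcal{B}_{P(\A_n)}=\mathcal{UT}_n$ has exactly $2^{n(n+1)/2}$ elements. Choosing $Q_S=Q_{NS}=\{(i,i)\mid i\in Q\}$ makes $\A_n$ trivially IFO, which forces Algorithm~\ref{alg-trellis} to generate every element of $\mathcal{B}_{P(\A_n)}$ and, by Lemma~\ref{lem9}, forces the observer in Algorithm~\ref{alg1} to contain all $2^{n(n+1)/2}$ states; in both cases this is $\Omega(2^{n(n+1)/2})$ steps. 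Combining the two directions establishes tightness.

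The main obstacle I anticipate is not the counting but the two structural verifications that make the counting legitimate: confirming that $P(\A)$ is still partially ordered, so that the containment $\mathcal{B}_{P(\A)}\subseteq \mathcal{UT}_n$ is valid, and confirming that the witness family $\A_n$ really is partially ordered. For the latter I would note that each generator of $\mathcal{UT}_n$ is upper triangular, so every event of $\A_n$ induces only self-loops and forward edges in the index order; the reachability relation is thus contained in the total order $1<\cdots<n$ and is in particular antisymmetric, which is exactly the definition of partially ordered. A minor point to handle carefully is the reduction $|Q_{NS}|\le n$ used when bounding the observer of $\A_{NS}$, merging pairs that share a first component, so that the observer tracks at most $n$ components and its states remain determined by matrices of $\mathcal{B}_{P(\A)}$.
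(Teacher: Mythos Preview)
Your proposal is correct and follows essentially the same approach as the paper: the paper's entire argument is the paragraph preceding the theorem (hence the \qed), namely that upper-triangularity of the relation matrices forces $\mathcal{B}_{P(\A)}\subseteq\mathcal{UT}_n$ for the upper bound, and that the generator construction of $\A_n$ realizes the full semigroup $\mathcal{UT}_n$ for tightness, with Lemma~\ref{lem9} transferring both bounds to Algorithm~\ref{alg1}. Your write-up simply makes explicit the technical checks (partial-orderedness of $P(\A)$ and of $\A_n$, the reduction $|Q_{NS}|\le n$, the choice $Q_S=Q_{NS}=\{(i,i)\}$ forcing a positive instance) that the paper leaves implicit.
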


	Similarly, we could discuss deterministic partially ordered automata, the strings over which correspond to binary upper-triangular matrices with up to one nonzero element in each row. The semigroup of such matrices has $(n+1)!$ elements and $n(n-1)/2$ generators, resulting in the tight worst-case time complexity $O^*((n+1)!)$ of Algorithms~\ref{alg-trellis} and~\ref{alg1} for deterministic partially ordered automata. Note that deterministic partially ordered automata are weaker than their nondeterministic counterpart, see \cite{mfcs16:mktmmt_full}.

\section{Experimental Comparison -- Benchmarks}\label{secExp}
	We implemented Algorithm~\ref{alg-trellis} in the \emph{trellis} tool and Algorithm~\ref{alg1} in the \emph{faudes} tool. We call the latter tool \emph{faudes} because it uses the core of the C++ library \texttt{libFAUDES}.\footnote{\url{https://fgdes.tf.fau.de/faudes/index.html}}
 	Algorithm~\ref{alg3} is a schema implemented as a transducer taking an IFO instance and creating a language-inclusion instance that is subsequently verified by an external tool. Included tools are \emph{vata} of \cite{AbdullaCHMV10}, \emph{limi} of \cite{CernyCHRRST17}, and \emph{mata} of \cite{mata}.\footnote{For technical problems and memory consumptions, we excluded \emph{hkc} of \cite{BonchiP13} and \emph{rabit} of \cite{ClementeM19}.} 

 	We ran experiments on an Ubuntu 22.04.3 LTS machine with 32 Intel(R) Xeon(R) CPU E5-2660 v2@2.20GHz processors and 246 GB memory, executing 30 experiments in parallel, using the \texttt{parallel} tool of \cite{tange_2024_10558745}. Each tool was given a timeout of five minutes. The data and tools are available at \url{https://apollo.inf.upol.cz:81/masopust/ifo-benchmarks}, where we further present results of experiments on random data.

	\begin{table}
	\ra{1.1}
	\centering
	\begin{tabular}{lrrrr}
		\toprule
			& \multicolumn{2}{c}{Negative instances}
			& \multicolumn{2}{c}{Positive instances}\\
			\cmidrule(lr){2-3}
			\cmidrule(lr){4-5}
			Tool & 1 min. & 5 min. & 1 min. & 5 min.\\
		\midrule
		vata		& 769  & 583  & 671 & 518 \\
		trellis		 & 120  & 64    & 867 & 720 \\
		faudes	 & 485   & 71    & 402 & 58 \\
		mata	  &  467 & 55     & 397 & 49 \\
		limi		 & 47    &  1      & 56   &  4\\
		\bottomrule
	\end{tabular}
	\medskip
	\caption{The numbers of instances not solved within a given time.}\label{failsTab}
	\end{table}

	Table~\ref{failsTab} summarizes the number of instances the tool did not solve within the given timeout. The best performance is obtained by \emph{limi}, which solved almost all instances within five minutes. It is worth noticing that \emph{trellis} performs very well for negative instances, but very badly for positive instances. This observation is in accordance with theoretical results where, for negative instances, the computation of \emph{trellis} stops as soon as a counterexample is found, whereas for positive instances, each element of the semigroup has to be verified. 
	
	\begin{table}
		\ra{1.1}
		\centering
		\begin{tabular}{@{}lrr@{}}
			\toprule
			Tool & \multicolumn{1}{c}{Negative instances}
			& \multicolumn{1}{c}{Positive instances}\\
			\midrule
			vata 
			& 51.233 / 298.907 / 0.004
			& 58.317 / 289.707 / 0.005\\
			trellis 
			& 14.985 / 298.882 / 0.003
			& 91.045 / 294.742 / 0.004\\
			faudes 
			& 102.447 / 295.533 / 0.006
			& 82.658 / 297.417 / 0.006\\
			mata 	
			& 81.044 / 299.024 / 0.006
			& 60.948 / 291.438 / 0.005\\
			limi 		
			& 20.555 / 135.465 / 0.188
			& 18.652 / 291.571 / 0.205\\
			\bottomrule
		\end{tabular}
		\smallskip
		\caption{Average/maximum/minimum computation times in seconds.}\label{averageTimeTab}
	\end{table}
	
	The average, maximum, and minimum computation times of the tools are summarized in Table~\ref{averageTimeTab}. On average, \emph{limi} performs the best on both negative and positive instances, while \emph{trellis} performs the best for negative instances. The reason why the performance of \emph{trellis} is bad for positive instances was discussed above. Its success for negative instances, on the other hand, may come from the possibility of quickly arriving at a counterexample if IFO is not satisfied. However, the question why {\em limi} performs so well on both types of instances is an open problem that may suggest the existence of a theoretically faster algorithm for IFO verification. 
	
	\begin{figure}
		\centering
  		\subfloat[Negative instances.]{\includegraphics[scale=.66]{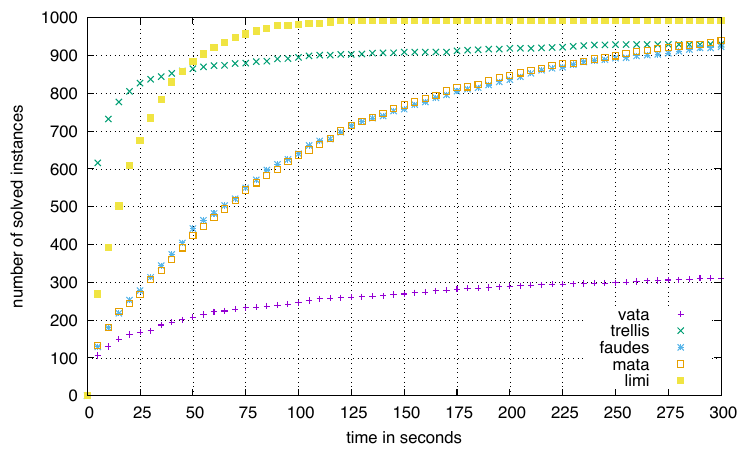}}\quad
  		\subfloat[Positive instances.]{\includegraphics[scale=.66]{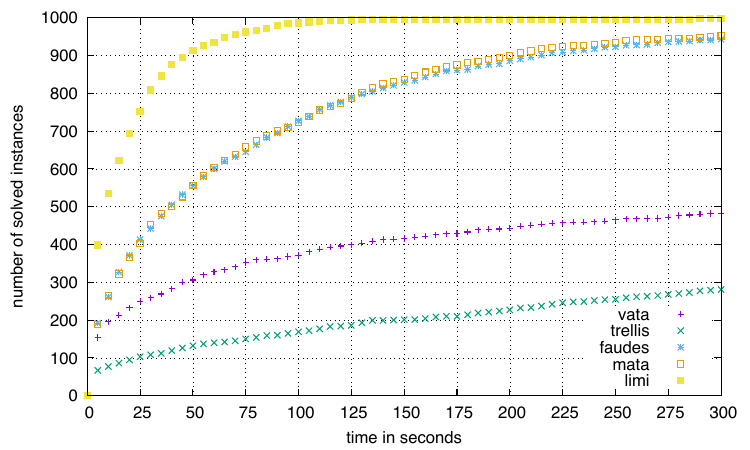}}
		\caption{Instances computed in five minutes. The $x$-axis displays the time (0--300 seconds), and the $y$-axis displays the number of solved instances in that time (0--994 for negative and 0--1001 for positive instances).}
  		\label{fig3}
	\end{figure}

	Figure~\ref{fig3} shows the number of instances solved by the tools in five minutes. It confirms that {\em limi} is the best tool to verify IFO, perhaps run with {\em trellis} in parallel to quickly catch negative instances. It is a challenging question whether the combination of the algorithms behind these tools may result in a better algorithm or help us answer the open problem of the complexity of the IFO-verification problem.

	Figure~\ref{fig5} shows the time to solve instances of particular sizes. Except for the case of \emph{trellis} for negative instances, the plots in a sense confirm the theoretical (super)exponential worst-case time complexity. Indeed, the reader can see that the plots resemble the positive part of the graph of an exponential function. The growth of the function is fastest for \emph{vata} and for positive instances of \emph{trellis}. On the other hand, for \emph{limi}, the growth is very slow, which makes the tool very attractive for the IFO verification.
	 
	\begin{figure}
		\begin{subfigure}{\columnwidth}
			\centering
			\includegraphics[scale=.32]{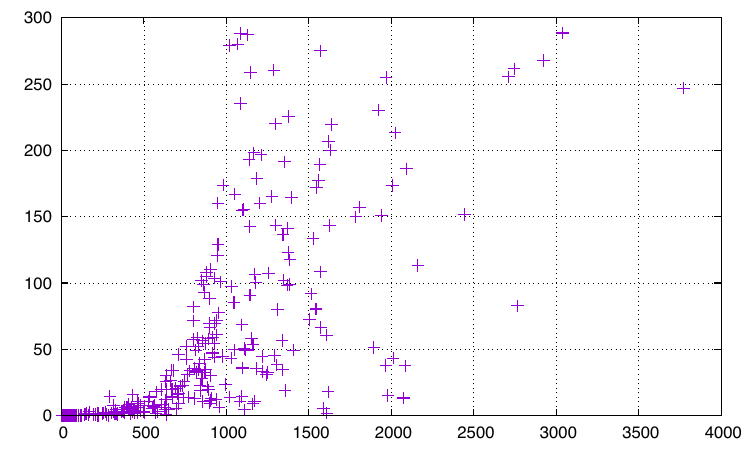}
			\includegraphics[scale=.32]{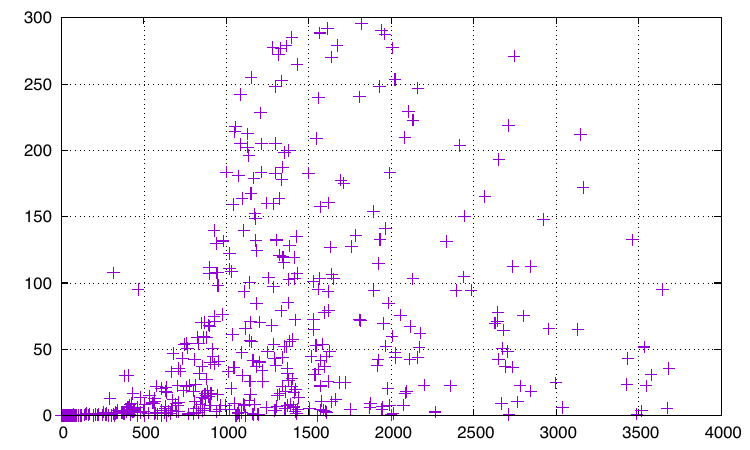}
			\caption{vata}
		\end{subfigure}
		\begin{subfigure}{\columnwidth}
			\centering
			\includegraphics[scale=.32]{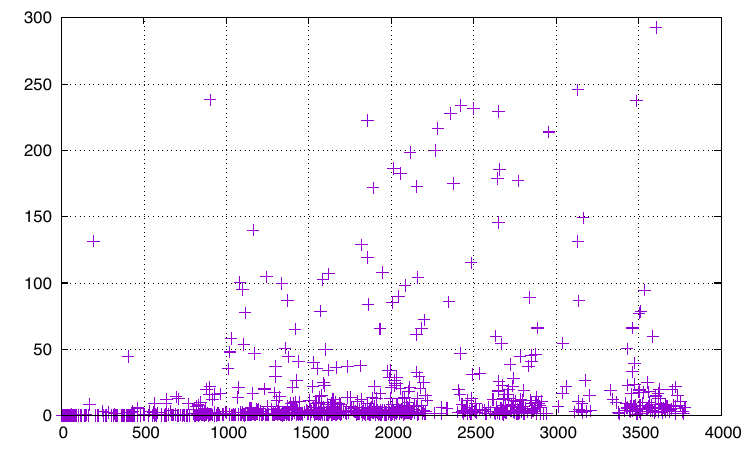}
			\includegraphics[scale=.32]{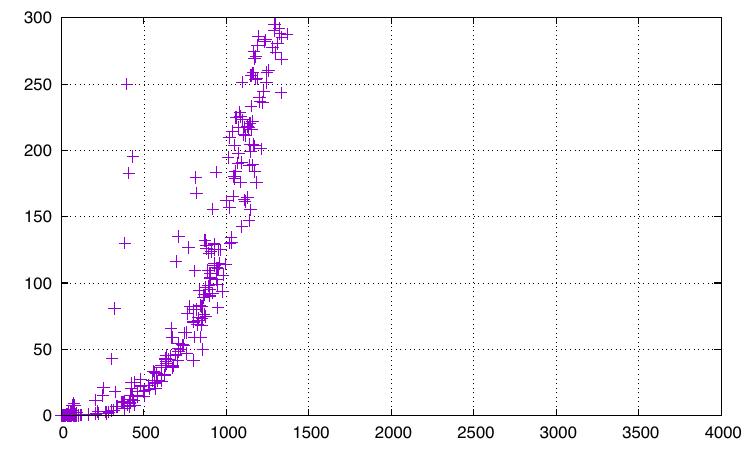}
			\caption{trellis}
		\end{subfigure}
		\begin{subfigure}{\columnwidth}
			\centering
			\includegraphics[scale=.32]{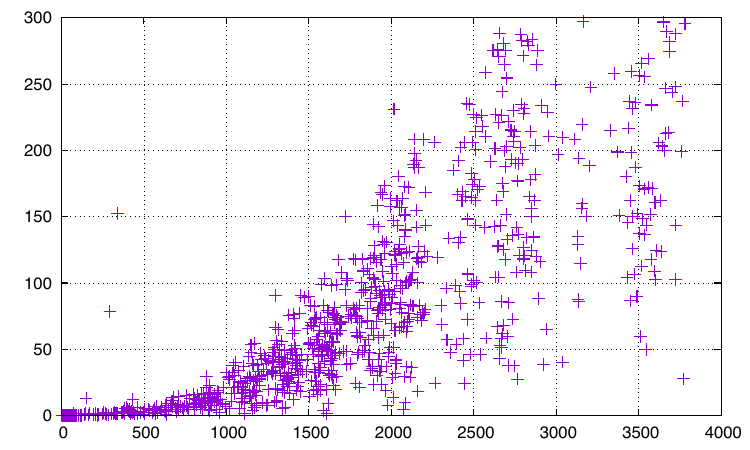}
			\includegraphics[scale=.32]{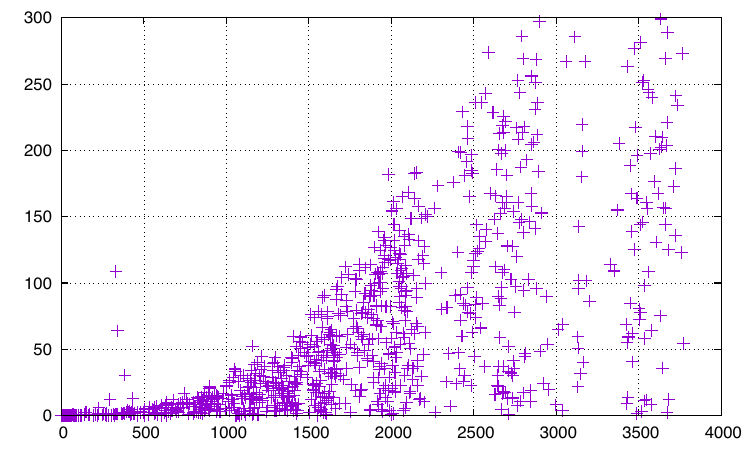}
			\caption{faudes}
		\end{subfigure}
		\begin{subfigure}{\columnwidth}
			\centering
			\includegraphics[scale=.32]{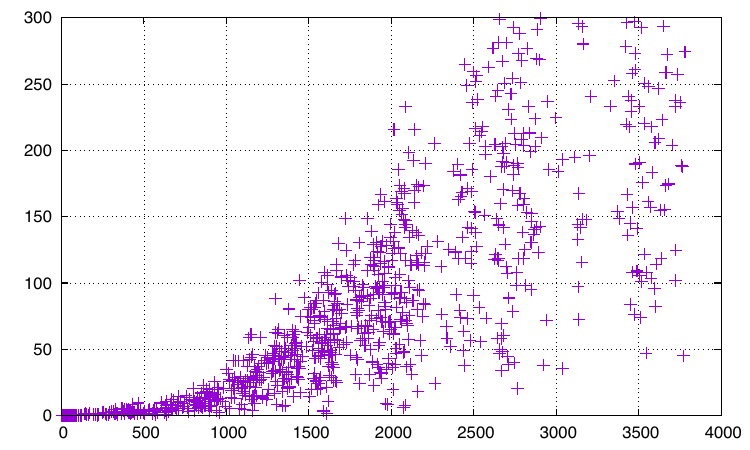}
			\includegraphics[scale=.32]{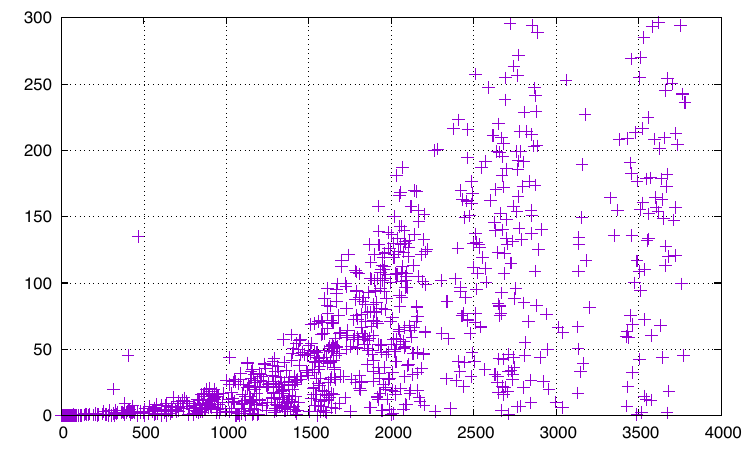}
			\caption{mata}
		\end{subfigure}
		\begin{subfigure}{\columnwidth}
			\centering
			\includegraphics[scale=.32]{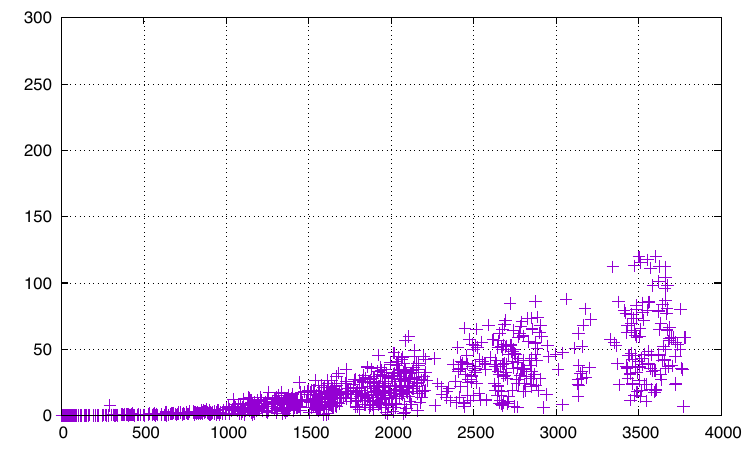}
			\includegraphics[scale=.32]{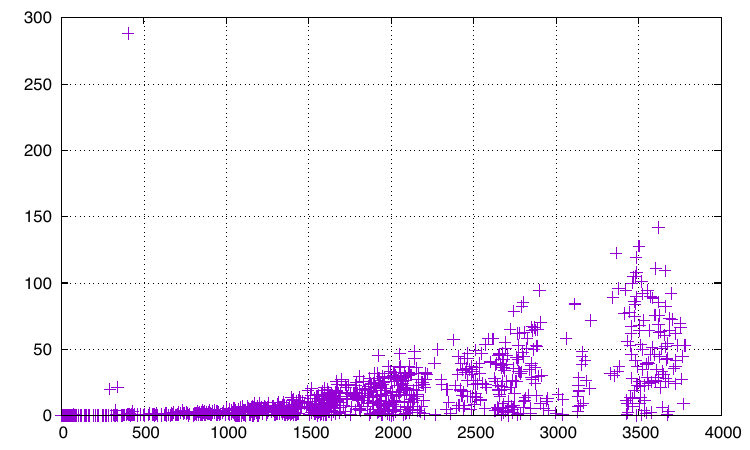}
			\caption{limi}
		\end{subfigure}
		\caption{Time to solve negative (left) and positive (right) instances. The $x$-axis displays the size of the instance (0--4000 states), and the $y$-axis displays the time to solve the instance (0--300 s).}
		\label{fig5}
	\end{figure}

	We would like to point out that the results on random data, presented at the git repository, show the same results. There, we generated uniformly at random 300 IFO instances for each of $250, 500, 750, \dots, 6000$ states, which resulted in 7200 random IFO instances. 
	
	Quite an efficient behavior of the advanced inclusion-based tools, in particular of the \emph{limi} tool, compared with the \emph{trellis} tool, is of particular significance, namely if the reader realizes that the inputs for the inclusion-based tools are of size quadratically larger than the inputs for {\em trellis}. In many cases, they have millions of states and even more transitions.

	Finally, we would like to mention that the \emph{trellis} and \emph{faudes} tools implementing the textbook algorithms (Algorithm~\ref{alg-trellis} and Algorithm~\ref{alg1}) are on purpose without any optimizations. Our results thus in no way say anything about (in)efficiency of the \texttt{libFAUDES} library. 

\begin{ack}
  We gratefully acknowledge suggestions and comments of the anonymous referees.
\end{ack}

\bibliographystyle{elsarticle-harv}
\bibliography{mybib}

\end{document}